\documentclass[twocolumn,10pt]{IEEEtran}

\usepackage{amsmath,amssymb,amsthm}
\usepackage{graphicx}
\usepackage{xcolor}
\usepackage{subfigure}
\usepackage[acronym]{glossaries}
\usepackage{tikz}
\usetikzlibrary{arrows.meta}

\definecolor{myred}{HTML}{B85450}
\definecolor{myblue}{HTML}{6C8EBF}

\newacronym{iid}{i.i.d.}{independent and identically distributed}
\newacronym{awgn}{AWGN}{additive white Gaussian noise}
\newacronym{snr}{SNR}{signal-to-noise ratio}
\newacronym{rf}{RF}{radio frequency}
\newacronym{milac}{MiLAC}{microwave linear analog computer}
\newacronym{ris}{RIS}{reconfigurable intelligent surface}
\newacronym{star-ris}{STAR-RIS}{simultaneously transmitting and reflecting RIS}
\newacronym{bd-ris}{BD-RIS}{beyond-diagonal RIS}
\newacronym{mimo}{MIMO}{multiple-input multiple-output}
\newacronym{em}{EM}{electromagnetic}

\newtheorem{proposition}{Proposition}

\begin{document}
\bstctlcite{BSTcontrol}

\title{\huge Microwave Linear Analog Computer (MiLAC) for\\Simultaneous Active and Passive Beamforming}

\author{Matteo~Nerini,~\IEEEmembership{Member,~IEEE},
        Bruno~Clerckx,~\IEEEmembership{Fellow,~IEEE}
        
\thanks{The authors are with the Department of Electrical and Electronic Engineering, Imperial College London, SW7 2AZ London, U.K. (e-mail: m.nerini20@imperial.ac.uk; b.clerckx@imperial.ac.uk).}}

\maketitle

\begin{abstract}
\Glspl{milac} have recently emerged to enable high-performance and efficient beamforming in the analog domain.
In this paper, we introduce a dual-functionality framework for \gls{milac}-aided transceivers.
Beyond analog-domain precoding/combining (active beamforming), a \gls{milac} and its antenna array can simultaneously act as a \gls{ris} (passive beamforming).
This allows the \gls{milac} to execute beamforming for transmission/reception while reflecting external incident signals.
We provide an optimal reconfiguration strategy for this dual-functional \gls{milac}, and characterize the fundamental limits on the trade-off between active and passive rate, namely the capacity region bounds and the sum-rate capacity.
\end{abstract}

\glsresetall

\begin{IEEEkeywords}
Beamforming, microwave linear analog computer (MiLAC), reconfigurable intelligent surface (RIS).
\end{IEEEkeywords}

\section{Introduction}

\Gls{milac} has been proposed to enable high-performance beamforming purely in the analog domain by processing the signals via a reconfigurable microwave network \cite{ner25-2}.
In a \gls{milac}-aided transmitter, the \gls{milac} receives as input the transmitted symbols from the \gls{rf} chains, precodes them in the analog domain, and returns the precoded signals to the antennas.
In a \gls{milac}-aided receiver, the \gls{milac} receives the inputs from the receiving antennas, combines them in the analog domain, and sends the processed signals to the \gls{rf} chains for baseband detection.
This analog architecture offers a scalable solution towards large-scale \gls{mimo} for single-user communications \cite{ner25-3}, multi-user communications \cite{fan26,wu26,zho26,zha26}, and radar sensing \cite{liu26}.

Reconfigurable microwave networks have also been widely investigated to implement \glspl{ris} \cite{wu19,wu21}.
By dynamically adjusting their \gls{em} properties, \glspl{ris} can control the reflection of incident signals through passive beamforming.
\glspl{ris} have been implemented conventionally with reconfigurable loads connected to antenna elements.
Nevertheless, they can also be implemented with more general microwave networks, such as those in \gls{milac}, and the resulting \gls{ris} architectures are denoted as \gls{star-ris} \cite{mu22} and \gls{bd-ris} \cite{she22}.

In this work, we bridge \gls{milac} and \gls{ris} by arguing that a \gls{milac} mounted on a transceiver device can perform a second function in addition to precoding/combining.
When the antenna array connected to the \gls{milac} is sufficiently large, it can also serve as a \gls{bd-ris}, reflecting incident signals coming from external transmitting devices toward their intended receivers.
Consequently, we introduce a dual-functionality framework where a \gls{milac} can simultaneously execute active beamforming (precoding/combining signals from/to its \gls{rf} chains) and passive beamforming (reflecting external incident signals to desired directions).

\begin{figure*}[t]
\centering
\includegraphics[width=0.64\textwidth]{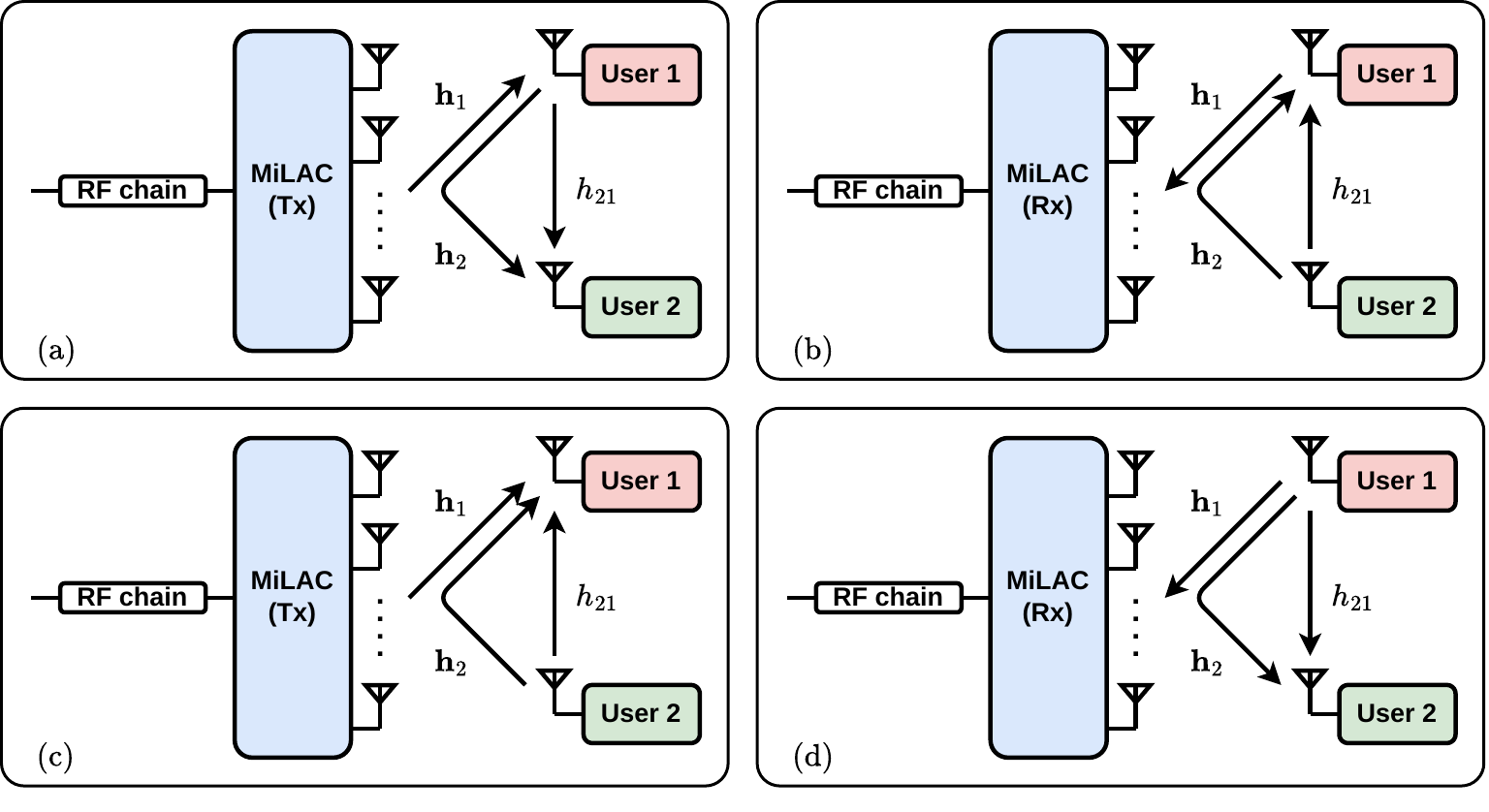}
\caption{Four wireless systems where MiLAC performs simultaneous active and passive beamforming.}
\label{fig:system}
\end{figure*}

To illustrate the utility of this dual functionality, four exemplary wireless scenarios are shown in Fig.~\ref{fig:system}.
In Fig.~\ref{fig:system}(a), the \gls{milac} transmits a signal to User~1, which operates in full-duplex and in turn transmits to User~2.
Therefore, the \gls{milac} simultaneously performs active beamforming toward User~1 and passive beamforming to help the communication from User~1 to 2.
In Fig.~\ref{fig:system}(b), the \gls{milac} receives the signal from User~1, which operates in full-duplex and in turn receives from User~2.
Here, the \gls{milac} simultaneously performs active beamforming toward User~1 and passive beamforming to aid the communication from User~2 to 1.
In Fig.~\ref{fig:system}(c), the \gls{milac} transmits a signal to User~1, which also receives the signal from User~2.
Hence, the \gls{milac} simultaneously performs active beamforming toward User~1 and passive beamforming to support the communication from User~2 to 1.
In Fig.~\ref{fig:system}(d), the \gls{milac} receives the signal from User~1, which also transmits to User~2.
Here, the \gls{milac} simultaneously performs active beamforming toward User~1 and passive beamforming to aid the communication from User~1 to 2.\footnote{Although the full-duplex scenarios in Fig.~\ref{fig:system} resemble the scenario studied in \cite{li25} to highlight the benefits of non-reciprocal \gls{bd-ris}, there is a fundamental difference.
Here, one stream is transmitted through the \gls{milac} and the other is reflected, while in \cite{li25} both streams are reflected by the \gls{bd-ris}.
As a result, a reciprocal \gls{milac} can achieve maximum performance here: none of the performance bounds derived in this paper can be improved by relaxing the symmetry constraints on the \gls{milac} scattering matrix.}

When a \gls{milac} is implemented with lossless and reciprocal hardware, its scattering matrix is unitary and symmetric.
This couples the transmissive and reflective capabilities of the \gls{milac}, creating a trade-off between the rates achievable by active and passive beamforming.
In this work, we explore the fundamental limits of this trade-off, and make the following contributions.
\textit{First}, we establish a system model for dual-functional \gls{milac}-aided transceivers and provide the expressions of the active and passive achievable rates.
\textit{Second}, we derive tight upper bounds on both rates, and show that they cannot be achieved simultaneously.
Therefore, we characterize in closed form their capacity region, i.e., the set of all rate pairs that are simultaneously achievable, and the sum-rate capacity.
\textit{Third}, we provide a global-optimal solution to optimize the \gls{milac} to operate on the frontier of the capacity region and achieve the sum-rate capacity, and report simulation results that confirm the theoretical insights.

\section{System Model}

To showcase the capability of \gls{milac} to support active and passive beamforming simultaneously, we consider the communication system represented in Fig.~\ref{fig:system}(a) throughout the work.
In Fig.~\ref{fig:system}(a), an $N$-antenna transmitter equipped with a \gls{milac} communicates with a single-antenna user, denoted as User~1.
At the same time, User~1 operates in full-duplex mode and transmits to a second single-antenna user, denoted as User~2.
The objective is to reconfigure the \gls{milac} such that it simultaneously performs beamforming toward User~1 (active beamforming), and reflects the signal transmitted by User~1 toward User~2 (passive beamforming).

Since the \gls{milac}-aided transmitter sends a single stream of information, it only has a single \gls{rf} chain \cite{ner25-3}.
Therefore, the \gls{milac} microwave network has one input port and $N$ output ports, for a total of $N+1$ ports, and can be characterized by its scattering matrix $\bar{\boldsymbol{\Theta}}\in\mathbb{C}^{(N+1)\times(N+1)}$ \cite{poz11}.
Under the common assumption that the \gls{milac} microwave network is implemented with lossless and reciprocal components, its scattering matrix is constrained to be unitary and symmetric, i.e., $\bar{\boldsymbol{\Theta}}^H\bar{\boldsymbol{\Theta}}=\mathbf{I}$ and $\bar{\boldsymbol{\Theta}}=\bar{\boldsymbol{\Theta}}^T$.
Furthermore, it can be conveniently partitioned as
\begin{equation}
\bar{\boldsymbol{\Theta}}=
\begin{bmatrix}
\theta & \boldsymbol{\theta}^T\\
\boldsymbol{\theta} & \boldsymbol{\Theta}
\end{bmatrix},\label{eq:T}
\end{equation}
where $\theta\in\mathbb{C}$ is the reflection coefficient at port~1, $\boldsymbol{\theta}\in\mathbb{C}^{N\times1}$ is the transmission scattering matrix from port~1 to the $N$ ports connected to the $N$ antennas, and $\boldsymbol{\Theta}\in\mathbb{C}^{N\times N}$.
We now illustrate how $\bar{\boldsymbol{\Theta}}$ affects the communication between \gls{milac} and User~1, and between User~1 and 2.
In this work, all signals are \gls{rf} narrowband signals and are represented through their complex baseband equivalent.

Denote as $s_M\in\mathbb{C}$ the symbol sent by the \gls{milac}-aided transmitter, i.e., at the output of its \gls{rf} chain, and as $y_1\in\mathbb{C}$ the corresponding received signal at User~1.
Following \cite{ner25-3}, the signal $y_1$ is given by
\begin{equation}
y_1=\mathbf{h}_1^T\boldsymbol{\theta}x_M+n_1,\label{eq:y1}
\end{equation}
where $\mathbf{h}_1\in\mathbb{C}^{N\times 1}$ is the channel from User~1 and the \gls{milac}, $x_M=s_M/2$, and $n_1\sim\mathcal{CN}(0,\sigma_1^2)$ is the \gls{awgn} at User~1 with power $\sigma_1^2$.
To understand how the \gls{milac} scattering matrix $\bar{\boldsymbol{\Theta}}$ impacts $y_1$, we equivalently rewrite \eqref{eq:y1} as
\begin{equation}
y_1=\bar{\mathbf{h}}_1^T\bar{\boldsymbol{\Theta}}\mathbf{b}x_M+n_1,\label{eq:y1-bar}
\end{equation}
where $\bar{\mathbf{h}}_1=[0,\mathbf{h}_1^T]^T\in\mathbb{C}^{(N+1)\times1}$ and $\mathbf{b}=[1,0,\dots,0]^T\in\mathbb{C}^{(N+1)\times1}$, where we explicitly highlight $\bar{\boldsymbol{\Theta}}$.
Following \eqref{eq:y1} and \eqref{eq:y1-bar}, the rate achieved between the \gls{milac} and User~1 is
\begin{align}
R_{1}
&=\log_2\left(1+P_M\frac{\left\vert\mathbf{h}_1^T\boldsymbol{\theta}\right\vert^2}{\sigma_1^2}\right)\label{eq:R1}\\
&=\log_2\left(1+P_M\frac{\left\vert\bar{\mathbf{h}}_1^T\bar{\boldsymbol{\Theta}}\mathbf{b}\right\vert^2}{\sigma_1^2}\right),\label{eq:R1-bar}
\end{align}
where $P_M=\mathbb{E}[\vert x_M\vert^2]=\mathbb{E}[\vert s_M\vert^2]/4$ is the transmit power.

Similarly, denote as $x_1\in\mathbb{C}$ the signal transmitted by User~1 and as $y_2\in\mathbb{C}$ the corresponding received signal at User~2.
The signal $y_2$ writes as
\begin{equation}
y_2=\left(h_{21}+\mathbf{h}_2^T\boldsymbol{\Theta}\mathbf{h}_1\right)x_1+n_2,\label{eq:y2}
\end{equation}
where $h_{21}\in\mathbb{C}$ is the direct channel from User~1 to 2, $\mathbf{h}_2\in\mathbb{C}^{N\times 1}$ is the channel from User~2 to the \gls{milac}, and $n_2\sim\mathcal{CN}(0,\sigma_2^2)$ is the \gls{awgn} at User~2 with power $\sigma_2^2$.
To highlight the role of the \gls{milac} scattering matrix $\bar{\boldsymbol{\Theta}}$, \eqref{eq:y2} can also be written as
\begin{equation}
y_2=\left(h_{21}+\bar{\mathbf{h}}_2^T\bar{\boldsymbol{\Theta}}\bar{\mathbf{h}}_1\right)x_1+n_2,\label{eq:y2-bar}
\end{equation}
where $\bar{\mathbf{h}}_2=[0,\mathbf{h}_2^T]^T\in\mathbb{C}^{(N+1)\times1}$.
From \eqref{eq:y2} and \eqref{eq:y2-bar}, the rate achieved between User~1 and 2 is
\begin{align}
R_{2}
&=\log_2\left(1+P_1\frac{\left\vert h_{21}+\mathbf{h}_2^T\boldsymbol{\Theta}\mathbf{h}_1\right\vert^2}{\sigma_2^2}\right)\label{eq:R2}\\
&=\log_2\left(1+P_1\frac{\left\vert h_{21}+\bar{\mathbf{h}}_2^T\bar{\boldsymbol{\Theta}}\bar{\mathbf{h}}_1\right\vert^2}{\sigma_2^2}\right),\label{eq:R2-bar}
\end{align}
where $P_1=\mathbb{E}[\vert x_1\vert^2]$ is the transmit power.

\section{MiLAC Optimization}
\label{sec:opt}

When reconfiguring the \gls{milac}, our objective is to optimize $\bar{\boldsymbol{\Theta}}$ subject to the unitary and symmetric constraints to simultaneously maximize the two rates $R_{1}$ and $R_{2}$.
In this section, we first optimize $\bar{\boldsymbol{\Theta}}$ to individually maximize $R_{1}$ and $R_{2}$.
Second, we characterize the capacity region, i.e., the set of all rate pairs $(R_{1},R_{2})$ that are simultaneously achievable.
Third, we maximize the sum-rate $R=R_{1}+R_{2}$.

\subsection{Individual Rate Maximization}
\label{sec:opt-ind}

Two opposite working modes can be identified for the \gls{milac}.
In the first mode, it is used purely for active beamforming, i.e., to precode the symbol $x_M$ for User~1.
In this case, we want to maximize $R_{1}$, which is upper bounded by
\begin{equation}
\bar{R}_{1}=\log_2\left(1+P_M\frac{\left\Vert\mathbf{h}_1\right\Vert^2}{\sigma_1^2}\right),\label{eq:R1-UB}
\end{equation}
according to the Cauchy-Schwarz inequality applied to \eqref{eq:R1} and because $\Vert\boldsymbol{\theta}\Vert\leq1$.
This upper bound can be exactly achieved if and only if $\boldsymbol{\theta}$ is designed such that
\begin{equation}
\frac{\mathbf{h}_1^T}{\left\Vert\mathbf{h}_1\right\Vert}\boldsymbol{\theta}=1,\label{eq:cond-1}
\end{equation}
as it can be noticed from \eqref{eq:R1}, or, equivalently, if and only if
\begin{equation}
\bar{\boldsymbol{\Theta}}\frac{\bar{\mathbf{h}}_1}{\left\Vert\mathbf{h}_1\right\Vert}=\mathbf{b},\label{eq:cond-1-bar}
\end{equation}
as it can be seen from \eqref{eq:R1-bar}.
Condition \eqref{eq:cond-1-bar} can be satisfied by optimizing the unitary and symmetric matrix $\bar{\boldsymbol{\Theta}}$ with \cite[Alg.~1]{ner24}, which was developed for \gls{bd-ris}, as a function of $\bar{\mathbf{h}}_1/\Vert\mathbf{h}_1\Vert$ and $\mathbf{b}$.

In the second mode, the \gls{milac} is used purely for passive beamforming, i.e., to enhance the link between User~1 and 2.
Here, we want to maximize $R_{2}$, which is upper bounded by
\begin{equation}
\bar{R}_{2}=\log_2\left(1+P_1\frac{\left(\left\vert h_{21}\right\vert+\left\Vert\mathbf{h}_2\right\Vert\left\Vert\mathbf{h}_1\right\Vert\right)^2}{\sigma_2^2}\right),\label{eq:R2-UB}
\end{equation}
following the Cauchy-Schwarz and triangle inequalities applied to \eqref{eq:R2} and because $\Vert\bar{\boldsymbol{\Theta}}\Vert\leq1$.
This upper bound can be exactly achieved if and only if $\boldsymbol{\Theta}$ is designed such that
\begin{equation}
\boldsymbol{\Theta}\frac{\mathbf{h}_1}{\left\Vert\mathbf{h}_1\right\Vert}=\frac{\mathbf{h}_2^*}{\left\Vert\mathbf{h}_2\right\Vert}e^{j\arg\left(h_{21}\right)},\label{eq:cond-2}
\end{equation}
as it can be noticed from \eqref{eq:R2}, or, equivalently, if and only if
\begin{equation}
\bar{\boldsymbol{\Theta}}\frac{\bar{\mathbf{h}}_1}{\left\Vert\mathbf{h}_1\right\Vert}=\frac{\bar{\mathbf{h}}_2^*}{\left\Vert\mathbf{h}_2\right\Vert}e^{j\arg\left(h_{21}\right)},\label{eq:cond-2-bar}
\end{equation}
as it can be seen from \eqref{eq:R2-bar}.
As for \eqref{eq:cond-1-bar}, condition \eqref{eq:cond-2-bar} can be achieved by optimizing $\bar{\boldsymbol{\Theta}}$ with \cite[Alg.~1]{ner24}.

We have derived sufficient and necessary conditions on $\bar{\boldsymbol{\Theta}}$ to maximize the two rates $R_{1}$ and $R_{2}$ in \eqref{eq:cond-1-bar} and \eqref{eq:cond-2-bar}, respectively.
These two conditions have the same left-hand side but orthogonal right-hand side (recall that $\mathbf{b}$ is zero everywhere except in its first entry, while $\bar{\mathbf{h}}_2$ has a zero first entry).
Thus, the two rates $R_{1}$ and $R_{2}$ cannot be simultaneously maximized.
If $R_{1}$ is maximized, i.e., $R_{1}=\bar{R}_{1}$, then $R_{2}=\underline{R}_{2}$, with $\underline{R}_{2}=\log_2(1+P_1\vert h_{21}\vert^2/\sigma_2^2)$, which is obtained by substituting \eqref{eq:cond-1-bar} into \eqref{eq:R2-bar} and is the rate $R_{2}$ achievable by only using the direct link $h_{21}$ between User~1 and 2.
If $R_{2}$ is maximized, i.e., $R_{2}=\bar{R}_{2}$, then $R_{1}=0$, as obtained by plugging \eqref{eq:cond-2-bar} into \eqref{eq:R1-bar}.

\subsection{Capacity Region Characterization}
\label{sec:opt-cap}

Since $R_{1}$ and $R_{2}$ cannot be simultaneously maximized, the sum-rate upper bound $\bar{R}=\bar{R}_{1}+\bar{R}_{2}$ cannot be reached.
Thus, there exists a trade-off between the rates $R_{1}$ and $R_{2}$ that can be achieved.
Formally, this trade-off is given by the capacity region $\mathcal{C}$, i.e., the set of all rate pairs $(R_{1},R_{2})$ that are simultaneously achievable.
In the following, we characterize the frontier of such a region.

\begin{figure}
\centering
\begin{tikzpicture}[scale=1, transform shape]
\shade[ball color=black!2, opacity=0.4] (0,0) circle (2);

\draw[gray, thin, dashed]
(1.99,0) arc[start angle=0, end angle=180, x radius=1.99, y radius=1.22];

\draw[gray, thin]
(-1.99,0) arc[start angle=180, end angle=360, x radius=1.99, y radius=1.22];

\draw[gray, thin, dashed]
(0,1.99) arc[start angle=90, end angle=270, x radius=1.22, y radius=1.99];

\draw[gray, thin]
(0,-1.99) arc[start angle=-90, end angle=90, x radius=1.22, y radius=1.99];

\draw[gray, thin]
(0,0.2,0) -- (-0.1,0.2,0.1732) -- (-0.1,0,0.1732);

\fill[shade, left color=black!2, right color=black!80, opacity=0.2]
(0,0,0) -- (0,2,0) -- (-1,0,1.732)
plot[domain=90:0, samples=60] ({-sin(\x)},{2*cos(\x)},{1.732*sin(\x)})
-- cycle;

\draw[very thick,-{Stealth}, myred]
(0,0,0) -- (0,2,0)
node[midway, below right] {$\mathbf{b}$};
    
\draw[very thick,-{Stealth}, myblue]
(0,0,0) -- (-1,0,1.732)
node[font=\small, midway, below right=-4pt] {$\frac{\bar{\mathbf{h}}_2^*}{\Vert\mathbf{h}_2\Vert}e^{j\arg(h_{21})}$};
    
\draw[very thick,-{Stealth}, darkgray]
(0,0,0) -- (-0.866,1,1.5)
node[midway, above] {$\bar{\mathbf{h}}$};

\draw[very thick, darkgray]
plot[domain=0:90, samples=100] ({-sin(\x)},{2*cos(\x)},{1.732*sin(\x)});
\end{tikzpicture}
\caption{The unit $N$-sphere, the orthonormal vectors $\bar{\mathbf{h}}_2^*/\left\Vert\mathbf{h}_2\right\Vert e^{j\arg(h_{21})}$ and $\mathbf{b}=[1,0,\dots,0]^T$, and the shortest arc connecting them.}
\label{fig:representation}
\end{figure}

We begin by giving an intuition on how the \gls{milac} should be reconfigured to operate on the frontier of the capacity region, and then provide a formal proof to show that this intuitive design is optimal.
The rates $R_{1}$ and $R_{2}$ are individually maximized if conditions \eqref{eq:cond-1-bar} and \eqref{eq:cond-2-bar} are fulfilled, respectively.
Since it is impossible to fulfill both of them simultaneously, we can impose a compromise
\begin{equation}
\bar{\boldsymbol{\Theta}}\frac{\bar{\mathbf{h}}_1}{\left\Vert\mathbf{h}_1\right\Vert}=\bar{\mathbf{h}},\label{eq:cond-opt}
\end{equation}
where $\bar{\mathbf{h}}\in\mathbb{C}^{(N+1)\times1}$ is introduced as
\begin{equation}
\bar{\mathbf{h}}=t\frac{\bar{\mathbf{h}}_2^*}{\left\Vert\mathbf{h}_2\right\Vert}e^{j\arg\left(h_{21}\right)}+\sqrt{1-t^2}\mathbf{b},
\end{equation}
depending on a parameter $t\in[0,1]$.
Note that $\bar{\mathbf{h}}$ is a unit-norm vector that always lies on the shortest arc of the unit $N$-sphere connecting $\bar{\mathbf{h}}_2^*/\Vert\mathbf{h}_2\Vert e^{j\arg(h_{21})}$ and $\mathbf{b}$, as illustrated in Fig.~\ref{fig:representation}.
Since $\bar{\mathbf{h}}$ has unit norm, condition \eqref{eq:cond-opt} can always be fulfilled by optimizing $\bar{\boldsymbol{\Theta}}$ with \cite[Alg.~1]{ner24}.
For $t=0$, condition \eqref{eq:cond-opt} boils down to \eqref{eq:cond-1-bar} and $R_{1}$ is maximized; for $t=1$, condition \eqref{eq:cond-opt} boils down to \eqref{eq:cond-2-bar} and $R_{2}$ is maximized; and for other values of $t$, we achieve
\begin{align}
R_{1}(t)&=\log_2\left(1+P_M\frac{\left\Vert\mathbf{h}_1\right\Vert^2\left(1-t^2\right)}{\sigma_1^2}\right),\label{eq:R1-curve}\\
R_{2}(t)&=\log_2\left(1+P_1\frac{\left(\left\vert h_{21}\right\vert+\left\Vert\mathbf{h}_2\right\Vert\left\Vert\mathbf{h}_1\right\Vert t\right)^2}{\sigma_2^2}\right),\label{eq:R2-curve}
\end{align}
which can be shown by substituting \eqref{eq:cond-opt} into \eqref{eq:R1-bar} and \eqref{eq:R2-bar}.
Interestingly, these two parametric expressions of $R_{1}(t)$ and $R_{2}(t)$ identify the frontier of the capacity region, as formalized in the following proposition.
\begin{proposition}
The frontier of the capacity region containing all possible pairs $(R_{1},R_{2})$ is given by the curve identified by the parametric expressions \eqref{eq:R1-curve} and \eqref{eq:R2-curve}, for $t\in[0,1]$.
\label{pro:curve}
\end{proposition}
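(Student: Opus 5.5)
The plan is to show two things: every point of the curve \eqref{eq:R1-curve}--\eqref{eq:R2-curve} is achievable, and every achievable pair $(R_{1},R_{2})$ is weakly dominated by some point of the curve.

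\textbf{Achievability.} This was already argued before the statement. For each $t\in[0,1]$ the vector $\bar{\mathbf{h}}$ has unit norm, so \cite[Alg.~1]{ner24} yields a unitary symmetric $\bar{\boldsymbol{\Theta}}$ satisfying \eqref{eq:cond-opt}. Substituting this into \eqref{eq:R1-bar} and \eqref{eq:R2-bar} gives exactly $R_{1}(t)$ and $R_{2}(t)$.

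\textbf{Converse.} Take an arbitrary unitary and symmetric $\bar{\boldsymbol{\Theta}}$ and define $\mathbf{u}=\bar{\boldsymbol{\Theta}}\bar{\mathbf{h}}_1/\Vert\mathbf{h}_1\Vert$. Since $\bar{\boldsymbol{\Theta}}$ is unitary, $\Vert\mathbf{u}\Vert=1$. Partition $\mathbf{u}=[u_1,\mathbf{u}_2^T]^T$, so that $\vert u_1\vert^2+\Vert\mathbf{u}_2\Vert^2=1$. Both rates can then be written through $\mathbf{u}$.
\begin{itemize}
\item For $R_{1}$, symmetry is the key step: $\bar{\mathbf{h}}_1^T\bar{\boldsymbol{\Theta}}\mathbf{b}=\mathbf{b}^T\bar{\boldsymbol{\Theta}}^T\bar{\mathbf{h}}_1=\mathbf{b}^T\bar{\boldsymbol{\Theta}}\bar{\mathbf{h}}_1=\Vert\mathbf{h}_1\Vert u_1$. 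Hence $R_{1}=\log_2(1+P_M\Vert\mathbf{h}_1\Vert^2\vert u_1\vert^2/\sigma_1^2)$ holds with equality.
\item For $R_{2}$, we have $\bar{\mathbf{h}}_2^T\bar{\boldsymbol{\Theta}}\bar{\mathbf{h}}_1=\Vert\mathbf{h}_1\Vert\,\mathbf{h}_2^T\mathbf{u}_2$. By Cauchy--Schwarz and the triangle inequality, $\vert h_{21}+\mathbf{h}_2^T\boldsymbol{\Theta}\mathbf{h}_1\vert\le\vert h_{21}\vert+\Vert\mathbf{h}_2\Vert\Vert\mathbf{h}_1\Vert\Vert\mathbf{u}_2\Vert$.
\end{itemize}
Now set $t=\Vert\mathbf{u}_2\Vert\in[0,1]$, so that $\vert u_1\vert^2=1-t^2$. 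This gives $R_{1}=R_{1}(t)$ and $R_{2}\le R_{2}(t)$. Thus every achievable pair lies on or below the curve at the same parameter $t$.

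\textbf{Frontier.} It remains to check that the curve really is the Pareto frontier. As $t$ increases, $R_{1}(t)$ is strictly decreasing and $R_{2}(t)$ is nondecreasing (strictly increasing whenever $\Vert\mathbf{h}_2\Vert\Vert\mathbf{h}_1\Vert>0$). So no point of the curve dominates another, and the curve is exactly the boundary of $\mathcal{C}$. Any pair with $R_{1}=R_{1}(t)$ and $R_{2}>R_{2}(t)$ is infeasible by the converse, while the curve points themselves are feasible by achievability.

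The main obstacle is the converse, specifically expressing $R_{1}$ through the \emph{same} vector $\mathbf{u}$ that governs $R_{2}$. Without the symmetry $\bar{\boldsymbol{\Theta}}=\bar{\boldsymbol{\Theta}}^T$, $R_{1}$ would involve the first column $\bar{\boldsymbol{\Theta}}\mathbf{b}$ rather than the first row. In that case I would instead use unitarity, which gives $\vert\mathbf{b}^T\bar{\boldsymbol{\Theta}}\bar{\mathbf{h}}_1\vert^2+\Vert\bar{\boldsymbol{\Theta}}\bar{\mathbf{h}}_1\vert_{2:N+1}\Vert^2=\Vert\mathbf{h}_1\Vert^2$. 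Once this identification is made, the rest reduces to Cauchy--Schwarz and the power-splitting identity $\vert u_1\vert^2+\Vert\mathbf{u}_2\Vert^2=1$.
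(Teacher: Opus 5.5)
Your proof is correct and follows the paper's argument. Both control the two rates through the unit vector $\bar{\boldsymbol{\Theta}}\bar{\mathbf{h}}_1/\Vert\mathbf{h}_1\Vert$: its squared projections onto the orthonormal directions $\mathbf{b}$ (which governs $R_{1}$, via symmetry) and $\bar{\mathbf{h}}_2^*/\Vert\mathbf{h}_2\Vert$ (which governs $R_{2}$) sum to at most one. You get this from the first-entry/remainder split plus Cauchy--Schwarz, the paper from Parseval over a basis containing both directions, and your single parametrization $t=\Vert\mathbf{u}_2\Vert$ is a tidier packaging of the paper's two conditional bounds.

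Your closing aside is not right, though. The unitarity identity you quote controls the first-row quantity $\mathbf{b}^T\bar{\boldsymbol{\Theta}}\bar{\mathbf{h}}_1$, which equals the quantity $\bar{\mathbf{h}}_1^T\bar{\boldsymbol{\Theta}}\mathbf{b}$ in $R_{1}$ only under symmetry. Symmetry is genuinely needed: for $N=2$, $\mathbf{h}_1\propto[1,0]^T$, $\mathbf{h}_2\propto[0,1]^T$ and $h_{21}=0$, the non-symmetric unitary ideal circulator $1\to2\to3\to1$ attains $\bar{R}_{1}$ and $\bar{R}_{2}$ simultaneously.
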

\begin{proof}
We prove the proposition by showing that: \textit{i)} if $R_{2}(t)$ is achieved for some $t\in[0,1]$, then $R_{1}\leq R_{1}(t)$, and \textit{ii)} if $R_{1}(t)$ is achieved for some $t\in[0,1]$, then $R_{2}\leq R_{2}(t)$.

First, achieving $R_{2}(t)$ implies that
\begin{equation}
\left\vert\hat{\mathbf{h}}_1^T\hat{\mathbf{h}}_2\right\vert=t,\label{eq:t}
\end{equation}
where we introduced $\hat{\mathbf{h}}_1=\bar{\boldsymbol{\Theta}}\bar{\mathbf{h}}_1/\Vert\mathbf{h}_1\Vert$ and $\hat{\mathbf{h}}_2=\bar{\mathbf{h}}_2/\Vert\mathbf{h}_2\Vert$.
We now use the Parseval's identity to express $\Vert\hat{\mathbf{h}}_1\Vert^2$ as
\begin{equation}
\left\Vert\hat{\mathbf{h}}_1\right\Vert^2=\sum_{\mathbf{v}\in\mathcal{B}}\left\vert\hat{\mathbf{h}}_1^T\mathbf{v}\right\vert^2,\label{eq:parseval}
\end{equation}
where $\mathcal{B}=\{\mathbf{v}_{1},\ldots,\mathbf{v}_{N+1}\}$ is any orthonormal basis of $\mathbb{C}^{N+1}$, with $\mathbf{v}_{n}\in\mathbb{C}^{N+1\times 1}$ for $n=1,\ldots,N+1$.
We construct such a basis $\mathcal{B}$ as
\begin{equation}
\mathcal{B}=\left\{\hat{\mathbf{h}}_2,\mathbf{b},\mathbf{v}_{3},\ldots,\mathbf{v}_{N+1}\right\},\label{eq:B}
\end{equation}
where $\mathbf{v}_{3},\ldots,\mathbf{v}_{N+1}$ are orthonormal vectors orthogonal to $\hat{\mathbf{h}}_2$ and $\mathbf{b}$.
Note that \eqref{eq:B} is a valid basis since $\hat{\mathbf{h}}_2$ and $\mathbf{b}$ are orthonormal.
Recalling that $\Vert\hat{\mathbf{h}}_1\Vert^2=1$ and applying the Parseval's identity to rewrite $\Vert\hat{\mathbf{h}}_1\Vert^2$ as in \eqref{eq:parseval}, we obtain
\begin{align}
1
&=\left\vert\hat{\mathbf{h}}_1^T\hat{\mathbf{h}}_2\right\vert^2+\left\vert\hat{\mathbf{h}}_1^T\mathbf{b}\right\vert^2+\sum_{n=3}^{N+1}\left\vert\hat{\mathbf{h}}_1^T\mathbf{v}_n\right\vert^2\\
&\geq\left\vert\hat{\mathbf{h}}_1^T\hat{\mathbf{h}}_2\right\vert^2+\left\vert\hat{\mathbf{h}}_1^T\mathbf{b}\right\vert^2.\label{eq:norm-ineq}
\end{align}
Substituting \eqref{eq:t} into \eqref{eq:norm-ineq}, we obtain $\vert\hat{\mathbf{h}}_1^T\mathbf{b}\vert\leq\sqrt{1-t^2}$, giving $R_{1}\leq R_{1}(t)$.
Second, achieving $R_{1}(t)$ implies that
\begin{equation}
\left\vert\hat{\mathbf{h}}_1^T\mathbf{b}\right\vert=\sqrt{1-t^2}.\label{eq:1-t}
\end{equation}
Similarly to the previous discussion, substituting \eqref{eq:1-t} into \eqref{eq:norm-ineq}, we obtain $\vert\hat{\mathbf{h}}_1^T\hat{\mathbf{h}}_2\vert\leq t$, giving $R_{2}\leq R_{2}(t)$.
\end{proof}

We can reach all points on the capacity region frontier depending on the parameter $t\in[0,1]$.
If $t=0$, the \gls{milac} is used purely for active beamforming (as a transmitter); if $t=1$, it is used purely for passive beamforming (as a \gls{bd-ris}); and if $t\in(0,1)$, it is used for active and passive beamforming simultaneously (which is reminiscent of a \gls{star-ris}).

\subsection{Sum-Rate Maximization}
\label{sec:opt-sum}

Among all the points on the capacity region, we are interested in the sum-rate capacity $C$ defined as the maximum achievable sum of rates $C=\max_{(R_{1},R_{2})\in\mathcal{C}}R_{1}+R_{2}$.
Since the sum-rate capacity lies on the capacity region frontier given by \eqref{eq:R1-curve} and \eqref{eq:R2-curve}, we want to find the optimal value of $t\in[0,1]$, called $t^\star$, that maximizes the sum-rate $R(t)=R_{1}(t)+R_{2}(t)$.
To this end, we find all stationary points of $R(t)$ for $t\in[0,1]$.
It can be shown that solving $dR(t)/dt=0$ is equivalent to finding the roots of the cubic equation
\begin{equation}
at^3+bt^2+ct+d=0,\label{eq:cubic}
\end{equation}
where
\begin{align}
a=&2P_1P_M\left\Vert\mathbf{h}_1\right\Vert^3\left\Vert\mathbf{h}_2\right\Vert^2\\
b=&3P_1P_M\left\Vert\mathbf{h}_1\right\Vert^2\left\Vert\mathbf{h}_2\right\Vert\left\vert h_{21}\right\vert\\
c=&P_M\left\Vert\mathbf{h}_1\right\Vert\left(P_1\left\vert h_{21}\right\vert^2+\sigma_2^2\right)\\
-&P_1\left\Vert\mathbf{h}_1\right\Vert\left\Vert\mathbf{h}_2\right\Vert^2\left(P_M\left\Vert\mathbf{h}_1\right\Vert^2+\sigma_1^2\right)\\
d=&-P_1\left\Vert\mathbf{h}_2\right\Vert\left\vert h_{21}\right\vert\left(P_M\left\Vert\mathbf{h}_1\right\Vert^2+\sigma_1^2\right).
\end{align}
Thus, there are always up to five candidate values for $t^\star$, given by the real roots of \eqref{eq:cubic} lying in the interval $[0,1]$, which are up to three, together with the extreme values $0$ and $1$.
After solving \eqref{eq:cubic}, the optimal $t^\star$ can be readily found by evaluating $R(t)$ in these candidate points.

In the case the direct link between User~1 and 2 is highly obstructed, i.e., $h_{21}=0$, \eqref{eq:cubic} boils down to $t(at^2+c)=0$ since $b=0$ and $d=0$.
This equation has at most two real solutions in the interval $[0,1]$, namely $t=0$ and
\begin{equation}
t=\sqrt{\frac{-c}{a}}=\sqrt{\frac{P_1\left\Vert\mathbf{h}_2\right\Vert^2\left(P_M\left\Vert\mathbf{h}_1\right\Vert^2+\sigma_1^2\right)-P_M\sigma_2^2}
{2P_1P_M\left\Vert\mathbf{h}_1\right\Vert^2\left\Vert\mathbf{h}_2\right\Vert^2}},\label{eq:2}
\end{equation}
the second of which could not be real.
If $h_{21}=0$ and we are in the high-\gls{snr} regime, i.e., $P_M\left\Vert\mathbf{h}_1\right\Vert^2\gg\sigma_1^2$ and $P_1\left\Vert\mathbf{h}_1\right\Vert^2\left\Vert\mathbf{h}_2\right\Vert^2\gg\sigma_2^2$, then \eqref{eq:2} simplifies to $t\approx\sqrt{1/2}$.
Therefore, in this case, there are three candidate values for $t^\star$, namely $t\in\{0,\sqrt{1/2},1\}$.
Comparing $R(0)$, $R(\sqrt{1/2})$, and $R(1)$ when $h_{21}=0$ and we are at high-\gls{snr} regime, it is simple to show that the optimal $t$ is $t^\star=\sqrt{1/2}$.
This is aligned with the fact that at high-\gls{snr} uniform power allocation is optimal and offers a multiplexing gain of 2, rather than 1 (which is offered by the extreme $t=0$ and $t=1$).
\section{Numerical Results}
\label{sec:results}

\begin{figure*}[t]
\centering
\subfigure[]{
\includegraphics[width=0.30\textwidth]{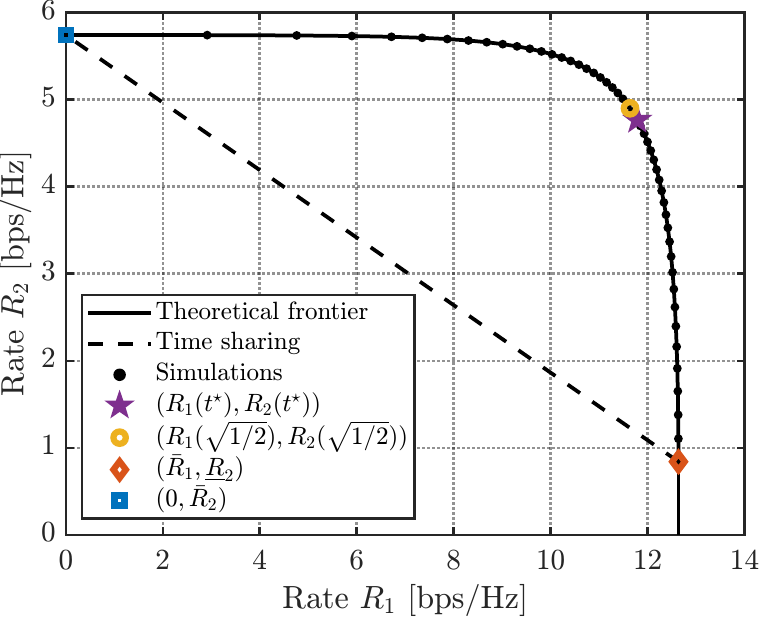}}
\subfigure[]{
\includegraphics[width=0.30\textwidth]{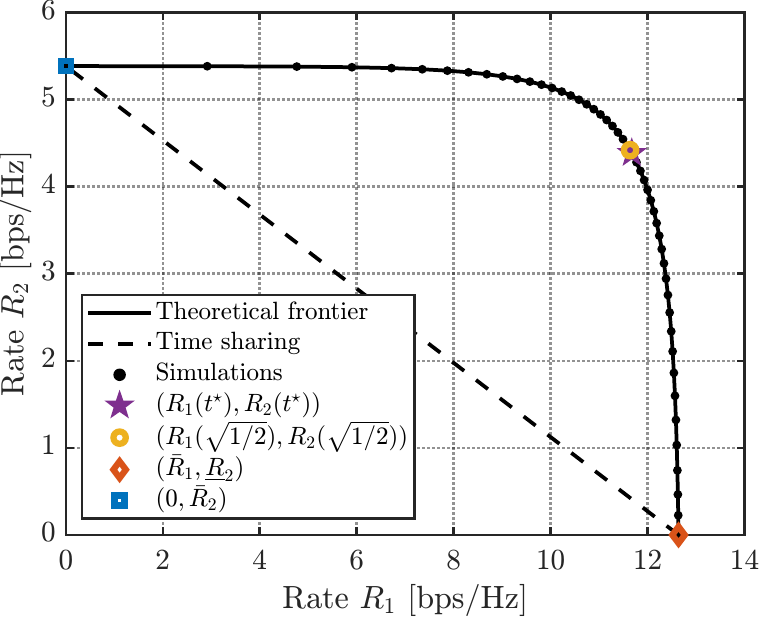}}
\caption{Capacity region (a) with and (b) without direct link $h_{21}$.}
\label{fig:region}
\end{figure*}

\begin{figure*}[t]
\centering
\subfigure[]{
\includegraphics[width=0.30\textwidth]{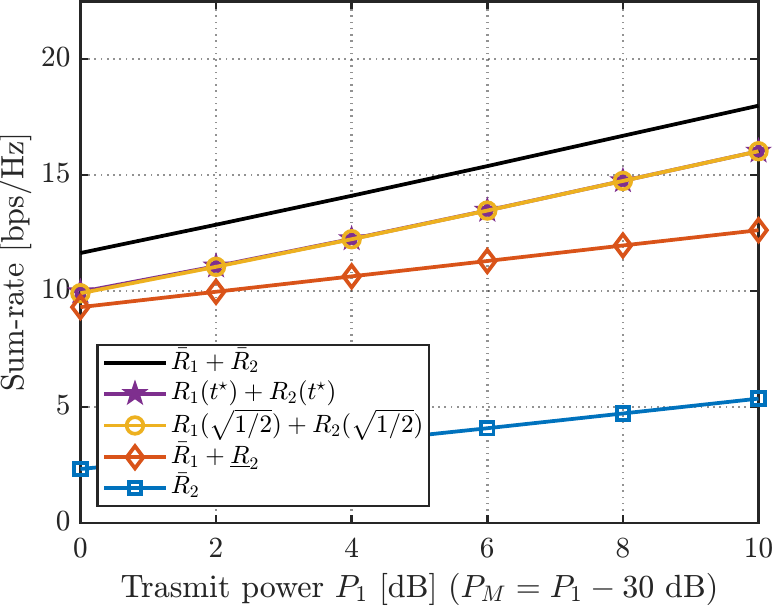}}
\subfigure[]{
\includegraphics[width=0.30\textwidth]{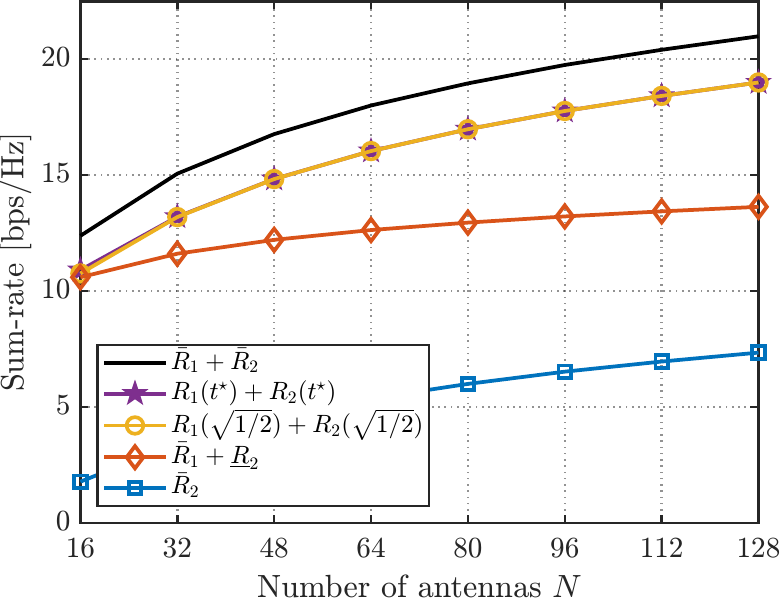}}
\caption{Sum-rate versus (a) the transmit power $P_1$ and (b) the number of antennas $N$ at the MiLAC, without direct link $h_{21}$.}
\label{fig:sum-rate}
\end{figure*}

In this section, we provide numerical results to validate our theoretical insights.
We consider \gls{iid} Rayleigh distributed channels, i.e., $h_{21}\sim\mathcal{CN}(0,g_{21})$, $\mathbf{h}_1\sim\mathcal{CN}(\mathbf{0},g_1\mathbf{I})$, and $\mathbf{h}_2\sim\mathcal{CN}(\mathbf{0},g_2\mathbf{I})$, with path gains $g_{21}=10^{-12}$, and $g_1=10^{-7}$, and $g_2=10^{-7}$.
We set the number of antennas at the \gls{milac} to $N=64$, the transmit powers to $P_1=10$~dB and $P_M=10$~dBm, and the noise powers to $\sigma_1^2=\sigma_2^2=-80$~dBm.

Fig.~\ref{fig:region} shows the capacity region of the achievable pairs $(R_1,R_2)$.
The corner points $(\bar{R}_{1},\underline{R}_{2})$ and $(0,\bar{R}_{2})$ are achieved by using the \gls{milac} purely for active or passive beamforming, respectively, as described in Section~\ref{sec:opt-ind}.
By time-sharing between these two operating modes, all points on the dashed line can be achieved, obtaining a multiplexing gain of 1 only.
Significantly better performance can be achieved by reconfiguring the \gls{milac} according to the approach in Section~\ref{sec:opt-cap}.
By reconfiguring the \gls{milac} to satisfy condition \eqref{eq:cond-opt} for $t\in[0,1]$, all points on the capacity region frontier can be achieved, as described by the parametric expressions \eqref{eq:R1-curve} and \eqref{eq:R2-curve}.
The sum-rate capacity is achieved with $t=t^\star$, on the point $(R_{1}(t^\star),R_{2}(t^\star))$, as described in Section~\ref{sec:opt-sum}.
Since, $t^\star\approx\sqrt{1/2}$ (as theoretically proved for when $h_{21}=0$ and we are in the high-\gls{snr} regime), the point $(R_{1}(t^\star),R_{2}(t^\star))$ approximately corresponds to $(R_{1}(\sqrt{1/2}),R_{2}(\sqrt{1/2}))$.

Fig.~\ref{fig:sum-rate} shows the sum-rate $R=R_1+R_2$ obtained by operating at specific values of the parameter $t$, compared to the sum-rate upper bound $\bar{R}=\bar{R}_{1}+\bar{R}_{2}$.
The curves $\bar{R}_{1}+\underline{R}_{2}$ and $\bar{R}_{2}$ are the sum-rates achievable by using the \gls{milac} purely for active or passive beamforming, respectively, i.e., with $t=0$ or $t=1$.
By simultaneously using the \gls{milac} for active and passive beamforming, the sum-rate capacity can be achieved with $t=t^\star$, which approximately corresponds to the sum-rate with $t=\sqrt{1/2}$.
Although the upper bound $\bar{R}$ cannot be achieved, it is nearly reached by the sum-rate capacity, which has a multiplexing gain of 2 unlike $\bar{R}_{1}+\underline{R}_{2}$ and $\bar{R}_{2}$.

\section{Conclusion}

We propose a dual-functionality framework for transceivers equipped with a \gls{milac}.
In this framework, a \gls{milac} can simultaneously execute active beamforming for transmission/reception and passive beamforming for reflecting external signals, as a \gls{bd-ris}.
Due to the unitary and symmetric constraints of lossless and reciprocal \glspl{milac}, a fundamental trade-off exists between these two operating modes.
We characterize the limits of this trade-off by deriving the system's capacity region and providing an optimal reconfiguration strategy to achieve sum-rate capacity.

\bibliographystyle{IEEEtran}
\bibliography{IEEEabrv,main}

\begin{thebibliography}{10}
\providecommand{\url}[1]{#1}
\csname url@samestyle\endcsname
\providecommand{\newblock}{\relax}
\providecommand{\bibinfo}[2]{#2}
\providecommand{\BIBentrySTDinterwordspacing}{\spaceskip=0pt\relax}
\providecommand{\BIBentryALTinterwordstretchfactor}{4}
\providecommand{\BIBentryALTinterwordspacing}{\spaceskip=\fontdimen2\font plus
\BIBentryALTinterwordstretchfactor\fontdimen3\font minus \fontdimen4\font\relax}
\providecommand{\BIBforeignlanguage}[2]{{%
\expandafter\ifx\csname l@#1\endcsname\relax
\typeout{** WARNING: IEEEtran.bst: No hyphenation pattern has been}%
\typeout{** loaded for the language `#1'. Using the pattern for}%
\typeout{** the default language instead.}%
\else
\language=\csname l@#1\endcsname
\fi
#2}}
\providecommand{\BIBdecl}{\relax}
\BIBdecl

\bibitem{ner25-2}
M.~Nerini and B.~Clerckx, ``Analog computing for signal processing and communications – {Part II}: Toward gigantic {MIMO} beamforming,'' \emph{IEEE Trans. Signal Process.}, vol.~73, pp. 5198--5212, 2025.

\bibitem{ner25-3}
M.~Nerini and B.~Clerckx, ``Capacity of {MIMO} systems aided by microwave linear analog computers ({MiLACs}),'' \emph{arXiv preprint arXiv:2506.05983}, 2025.

\bibitem{fan26}
T.~Fang, X.~Zhou, and Y.~Mao, ``On the performance of lossless reciprocal {MiLAC} architectures in multi-user networks,'' \emph{IEEE Wireless Commun. Lett.}, vol.~15, pp. 2609--2613, 2026.

\bibitem{wu26}
Z.~Wu, M.~Nerini, and B.~Clerckx, ``Microwave linear analog computer ({MiLAC})-aided multiuser {MISO}: Fundamental limits and beamforming design,'' \emph{arXiv preprint arXiv:2601.10060}, 2026.

\bibitem{zho26}
X.~Zhou, T.~Fang, Y.~Mao, and B.~Clerckx, ``Two-layer microwave linear analog computer ({MiLAC})-aided multi-user {MISO} networks,'' \emph{arXiv preprint arXiv:2604.24303}, 2026.

\bibitem{zha26}
Y.~Zhang, P.~Zheng, and T.~Y. Al-Naffouri, ``Quantization-aware {EE} optimization and {SE-EE} tradeoff for milac-aided {MU-MISO} beamforming,'' \emph{arXiv preprint arXiv:2604.24538}, 2026.

\bibitem{liu26}
Z.~Liu, Z.~Wu, and B.~Clerckx, ``Microwave linear analog computer ({MiLAC})-aided {MIMO} radar sensing: Transmit beamforming design and {DoA} estimation,'' \emph{arXiv preprint arXiv:2605.21020}, 2026.

\bibitem{wu19}
Q.~Wu and R.~Zhang, ``Intelligent reflecting surface enhanced wireless network via joint active and passive beamforming,'' \emph{IEEE Trans. Wireless Commun.}, vol.~18, no.~11, pp. 5394--5409, 2019.

\bibitem{wu21}
Q.~Wu, S.~Zhang, B.~Zheng, C.~You, and R.~Zhang, ``Intelligent reflecting surface-aided wireless communications: A tutorial,'' \emph{IEEE Trans. Commun.}, vol.~69, no.~5, pp. 3313--3351, 2021.

\bibitem{mu22}
X.~Mu, Y.~Liu, L.~Guo, J.~Lin, and R.~Schober, ``Simultaneously transmitting and reflecting ({STAR}) {RIS} aided wireless communications,'' \emph{IEEE Trans. Wireless Commun.}, vol.~21, no.~5, pp. 3083--3098, 2022.

\bibitem{she22}
S.~Shen, B.~Clerckx, and R.~Murch, ``Modeling and architecture design of reconfigurable intelligent surfaces using scattering parameter network analysis,'' \emph{IEEE Trans. Wireless Commun.}, vol.~21, no.~2, pp. 1229--1243, 2022.

\bibitem{li25}
H.~Li and B.~Clerckx, ``Non-reciprocal beyond diagonal {RIS}: Multiport network models and performance benefits in full-duplex systems,'' \emph{IEEE Trans. Commun.}, vol.~73, no.~11, pp. 12\,221--12\,234, 2025.

\bibitem{poz11}
D.~M. Pozar, \emph{Microwave engineering}.\hskip 1em plus 0.5em minus 0.4em\relax John Wiley \& Sons, 2011.

\bibitem{ner24}
M.~Nerini, S.~Shen, and B.~Clerckx, ``Closed-form global optimization of beyond diagonal reconfigurable intelligent surfaces,'' \emph{IEEE Trans. Wireless Commun.}, vol.~23, no.~2, pp. 1037--1051, 2024.

\end{thebibliography}

\end{document}